\let\oldendproof\endproof
\renewcommand\endproof{~\hfill\qed\oldendproof}
\crefname{conjecture}{Conjecture}{Conjectures}
\crefname{proposition}{Proposition}{Propositions}
\crefname{lemma}{Lemma}{Lemmata}
\crefname{theorem}{Theorem}{Theorems}
\crefname{section}{Section}{Sections}
\crefname{appendix}{Appendix}{Appendices}
\crefname{figure}{Fig.}{Figs.} % for \cref
\Crefname{figure}{Figure}{Figures} % for \Cref. for Sentence start, where Figure should not be abbreviated.
\newcommand{\pe}{\mathcal{P}}     %% Point set
\newcommand{\bigO}{\mathcal{O}}
\newcommand{\sets}{\ell}
\newcommand{\ccm}{\operatorname{ccm}} %convex P1,P2 (\ccm(n))
\newcommand{\cmk}{\operatorname{cm}} % Min_{P1,P2,...,P_\sets} (\cm(P1,P2,...,P_\sets))
\newcommand{\force}{\operatorname{force}}
\newcommand{\cforce}{\operatorname{cforce}}
\newcommand{\e}{{\operatorname{e}}} % extra pair of curly brackets for better spacing
\newcommand{\id}{\operatorname{id}}
\newcounter{lemmathreesets}
\renewcommand{\orcidID}[1]{\href{https://orcid.org/#1}{\includegraphics[scale=.03]{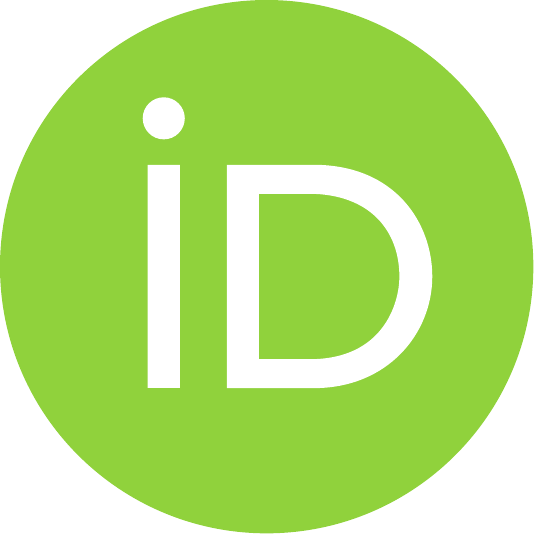}}}
\begin{document}
\title{On Compatible Matchings\thanks{%
A.A.\ funded by the Marie Sk{\l}odowska-Curie grant agreement No 754411.
Z.M.\ partially funded by Wittgenstein Prize, Austrian Science Fund (FWF), grant no.\ Z 342-N31.
I.P., D.P., and B.V.\ partially supported by FWF within the collaborative DACH project \emph{Arrangements and Drawings} as FWF project \mbox{I 3340-N35}. 
A.P.\ supported by a Schr\"odinger fellowship of the FWF: J-3847-N35.
J.T.\ partially supported by ERC Start grant no.\ (279307: Graph Games), FWF grant no.\ P23499-N23 and S11407-N23 (RiSE). 
}}
\author{Oswin~Aichholzer\inst{1}\orcidID{0000-0002-2364-0583} \and
Alan~Arroyo\inst{2}\orcidID{0000-0003-2401-8670} \and
Zuzana~Mas\'arov\'a\inst{2}\orcidID{0000-0002-6660-1322} \and
Irene~Parada\inst{3}\orcidID{0000-0003-3147-0083} \and
Daniel~Perz\inst{1}\orcidID{0000-0002-6557-2355} \and
Alexander~Pilz\inst{1}\orcidID{0000-0002-6059-1821} \and
Josef~Tkadlec\inst{2}\orcidID{0000-0002-1097-9684} \and
Birgit~Vogtenhuber\inst{1}\orcidID{0000-0002-7166-4467}}

\authorrunning{Aichholzer, Arroyo, Mas\'arov\'a, Parada, Perz, Pilz, Tkadlec, and Vogtenhuber} 
\institute{Institute of Software Technology, Graz University of Technology, Austria
\email{\{oaich,daperz,apilz,bvogt\}@ist.tugraz.at}
\and IST Austria \email{\{alanmarcelo.arroyoguevara,zuzana.masarova,josef.tkadlec\}@ist.ac.at}
\and TU Eindhoven, The Netherlands \email{i.m.de.parada.munoz@tue.nl}}
\maketitle    
\begin{abstract} 
	A matching is compatible to two or more labeled point sets of size $n$ with labels $\{1,\dots,n\}$ if its straight-line drawing on each of these point sets is crossing-free. 
We study the maximum number of edges in a matching compatible to two or more labeled point sets in general position in the plane. 
We show that for any two labeled convex sets of $n$ points there exists a compatible matching with $\lfloor \sqrt {2n}\rfloor$ edges.
More generally, for any $\sets$ labeled point sets we construct compatible matchings of size $\Omega(n^{1/\sets})$.
As a corresponding upper bound, we use probabilistic arguments to show that for any $\sets$ given sets of $n$ points there exists a labeling of each set such that the largest compatible matching has $\bigO(n^{2/(\sets+1)})$ edges.
Finally, we show that $\Theta(\log n)$ copies of any set of $n$ points are necessary and sufficient for the existence of a labeling such that any compatible matching consists only of a single edge.
%
%	
\iffalse  	
A matching is compatible to two or more labeled point sets 
if its straight-line drawing on each of these point sets is crossing-free. 
We study the maximum number of edges in a matching compatible 
to two or more labeled point sets in general position 
in the plane. 
%
We show that for any two labeled sets of $n$ points there exists a compatible matching consisting of $\lfloor \sqrt n\rfloor$ edges.
More generally, for any $\sets$ labeled point sets we show that a compatible matching of size $\Omega(n^{1/\sets})$ exists. 
As a corresponding upper bound, 
we prove that 
for any $\sets$ given sets of $n$ points there exists a labeling of each set such that 
the largest compatible matching has $\bigO(n^{2/(\sets+1)})$ edges.
Finally, we show that 
for any set of $n$ points $\Theta(\log n)$ copies of it are necessary and sufficient 
for the existence of a labeling such that 
any compatible matching consists only of a single edge.
%
%
%
\fi
\keywords{compatible graphs \and crossing-free matchings \and geometric graphs}
\end{abstract}
\section{Introduction}
For plane drawings of geometric graphs, the term \emph{compatible} is used in two rather different interpretations. In the first variant, {\it two} plane drawings of geometric graphs are embedded on the {\it same} set $P$ of points. They are called compatible (to each other with respect to $P$) if their union is plane (see e.g.~\cite{abdgh09,ist-dcgm-13}). Note that this is different to simultaneous planar graph embedding, as it is required not only that the two graphs are plane, but also that their union is crossing-free.

In the second setting, which is the one that we will consider in this work, {\it one} planar graph $G$ is drawn straight-line  
on {\it two or more} labeled point sets (with the same label set). We say that $G$ is compatible to the point sets if the drawing of $G$ is plane for each of them 
(where each vertex of $G$ is mapped to a unique label and thereby identified with a unique point of each point set).
Note that the labelings of the point sets can be predefined or part of the solution. As an example, we mention the compatible triangulation conjecture~\cite{AICHHOLZER20033}:
{\em For any two sets $P_1$ and $P_2$ with the same number of points and the same number of points on the boundary of the convex hull, there is a labeling of the two sets such that there exists a triangulation which is compatible to both sets, $P_1$ and $P_2$.}

\subsection*{Motivation and related work}
The study of the type of compatibility considered in this work (the second type from above) is motivated by 
applications in morphing~\cite{morphing0,GD_morphing,morphing1}, 
	2D shape animation~\cite{4815232},
	or cartography~\cite{s_jtatm_87}.

Compatible triangulations were first introduced by Saalfeld~\cite{s_jtatm_87} for labeled point sets. 
He pointed out that for pairs of labeled point sets, compatible triangulations do not always exist and studied the construction of compatible triangulations using (possibly exponentially many) 
Steiner points.
Aronov et al.~\cite{ass_octosp_93} and Babikov et al.~\cite{DBLP:conf/cccg/BabikovSW97} showed that $O(n^2)$ Steiner points are always sufficient, while Pach et al.~\cite{DBLP:journals/algorithmica/PachSS96} showed that $\Omega(n^2)$ Steiner points are sometimes necessary.
Aronov et al.~\cite{ass_octosp_93} also showed that for two labeled polygons, the existence of a compatible triangulation without Steiner points can be determined in polynomial time. The computational complexity question for labeled point sets or polygons with holes is still open. For polygons with holes, Lubiw and Mondal~\cite{LUBIW202097} showed NP-hardnes of deciding the existence of a compatible triangulation with at most $k$ Steiner points.

The compatible triangulation conjecture states that -- in contrast to labeled point sets -- two unlabeled point sets (in general position and with the same number of extreme points) can always be compatibly triangulated without using Steiner points. 
To date, the conjecture has only been proven for point sets with at most three interior points~\cite{AICHHOLZER20033}.
Krasser~\cite{k_ktep_99} showed that more than two point sets cannot always be compatibly triangulated.
Danciger et al.~\cite{DANCIGER2006195} considered compatibly triangulating two or more unlabeled point sets by using few Steiner points.

%%% PATHS + CYCLES
Concerning compatible paths, Hui and Schaefer~\cite{DBLP:conf/isaac/HuiS04} showed that it is NP-hard to decide
whether two labeled point sets admit a compatible %(plane) 
spanning path. Arseneva et al.~\cite{DBLP:conf/cccg/ArsenevaBBCCIJL18} presented efficient algorithms for finding a monotone compatible spanning path, or a compatible spanning path inside simple polygons (if they exist).
Czyzowicz et al.~\cite{Paths96} showed that any two convex labeled point sets admit a compatible path of length at least $\sqrt{2n}$ and also presented an $O(n^2\log n)$ algorithm to find such a path. 
In a similar direction, results from Czabarka and Wang~\cite{EScylic19} imply a lower bound of $(\sqrt{n-2}+2)/2$ on the length of the longest cycle compatible to two convex point sets.

In this paper we will focus on compatible matchings. To the best of our knowledge, previous results on (geometric) matchings study only compatibility of the first type, that is, where two matchings are embedded on the same point set. A well-studied question in this setting is whether any two perfect matchings can be transformed into each other by a sequence of steps such that at every step the intermediate graph is a perfect matching and the union of any two consecutive matchings is plane. Aichholzer at al.~\cite{abdgh09} proved that such a sequence of at most $O(\log n)$ steps always exists.
Questions of whether any matching of a given point set can be transformed into any other and how many steps it takes (that is, the connectivity of and the distance in the so-called {\em reconfiguration graph} of matchings, as well as its other properties) have been investigated also for matchings on bicolored point sets and for edge-disjoint compatible matchings,
see for example~\cite{abhpv-ltdbm2018,abls2015,ist-dcgm-13}.

\begin{figure}[tbh]
	\centering
	\begin{minipage}[t]{.43\textwidth}
		\centering
	\includegraphics[scale=0.45, page=2]{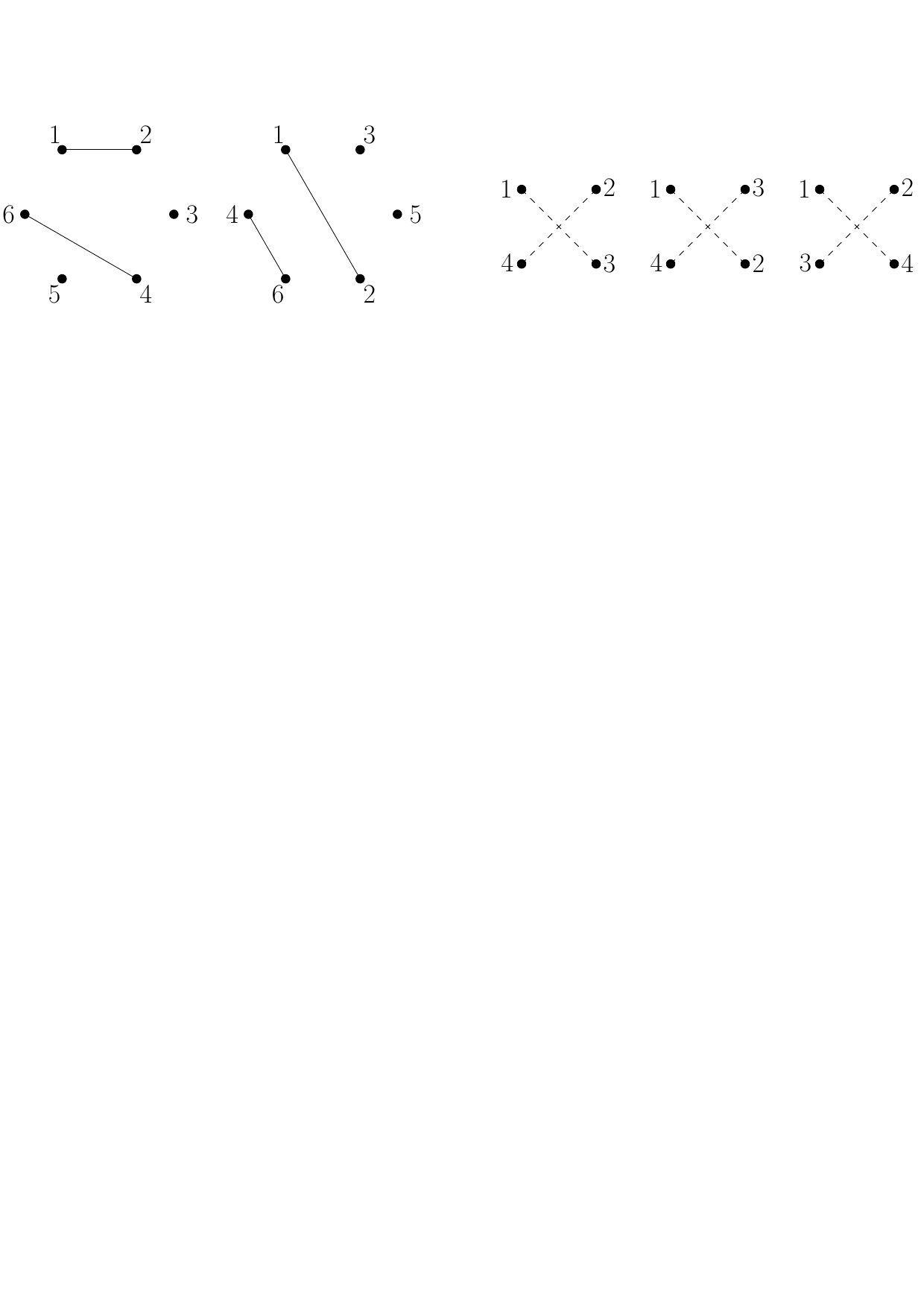}
	\subcaption{}
	\label{fig:simple-a}
	\end{minipage}
\hfill
	\begin{minipage}[t]{.42\textwidth}
	\centering
	\includegraphics[scale=0.45, page=3]{fig-simple.pdf}
	\subcaption{}
	\label{fig:simple-b}
	\end{minipage}
	\caption{(a) There is no perfect matching compatible to the two labeled sets. % of six points.
	(b)~Any possible pair of matching edges crosses in exactly one of the three sets. }
	\label{fig:simple}
\end{figure} 

\subsection*{Our results}
We study the second type of compatibility for matchings on two or more point sets. 
This is a setting for which no previous comprehensive theory appears to exist.
Throughout this paper, we denote unlabeled point sets with $P$ and labeled point sets with $\pe$ (both mostly with added indices to distinguish between different point sets).

We start by considering convex point sets: given two unlabeled convex point sets $P_1$, $P_2$, both with $n$ points, 
we study the largest guaranteed size $\ccm(n)$ of a compatible matching  
across all pairs of labelings of $P_1$ and $P_2$.
More formally, $\ccm(n)$ is the minimum over all pairs of labelings of the maximum compatible matching size for the accordingly labeled $n$-point set pairs. 
The largest compatible matching for two labeled point sets is not necessarily perfect, see~\cref{fig:simple-a}.
In \cref{sec:2convex}, we present upper and lower bounds on $\ccm(n)$. 
In particular, for any~$n$ that is a multiple of 10, we construct two labeled convex sets $\pe_1,\pe_2$ of $n$ points each, for which the largest compatible matching has $2n/5$ edges. 
Using probabilistic arguments, we obtain an upper bound of $\ccm(n)=\bigO(n^{2/3})$. 
For the lower bound, we show that for any pair of labeled convex point sets $\pe_1$, $\pe_2$ there exists a compatible matching consisting of $\lfloor \sqrt {2n}\rfloor$ edges. This implies that  $\ccm(n)=\Omega(\sqrt n)$.

We further extend our study to consider $\sets$ point sets in general position 
instead of just two point sets in convex position. 
Given $\sets$ unlabeled sets $P_1,\dots,P_{\sets}$, each consisting of $n$ points in general position,  
we denote by $\cmk(n;P_1,\dots,P_{\sets})$ the largest guaranteed size of a compatible matching 
across all $\sets$-tuples of labelings of $P_1,\dots,P_{\sets}$. 
We remark that the size $n$ of the point sets is included in the notation only for the sake of clarity (since our bounds depend on $n$).
In \cref{sec:general} we give bounds on $\cmk(n;P_1,\dots,P_{\sets})$ for any sets $P_1,\dots,P_{\sets}$ of $n$ points in general position. 
Building on the ideas of the proofs for two convex sets, 
we show that $\cmk(n;P_1,\dots,P_{\sets})=\bigO(n^{2/(\sets+1)})$ and that $\cmk(n;P_1,\dots,P_{\sets})=\Omega(n^{1/\sets})$. 

Finally, we investigate the question of how many labeled copies of a given unlabeled point set  
are needed so that the largest compatible matching consists of a single edge. 
Already for four points in convex position, three different sets are needed (and sufficient, see~\cref{fig:simple-b}).
In~\cref{sec:force} we prove that for any given set of $n\geq 5$ points in general position,
$\Theta(\log n)$ copies of it are necessary and sufficient for the existence of labelings forcing that the largest compatible matching consists of a single edge. 

For brevity, a plane matching that consists of $k$ edges is called a \emph{$k$-matching}.

\section{Two convex sets}
\label{sec:2convex}
Throughout this section we consider two labeled convex point sets $\pe_1$, $\pe_2$ consisting of $n$ points each. 
Without loss of generality we assume that $\pe_1$ is labeled $(1,2,\dots,n)$ in clockwise order and that $\pe_2$ is labeled $(\pi(1),\pi(2),\dots,\pi(n))$ in clockwise order for some permutation $\pi\colon[n]\to[n]$.
Note that for convex point sets, the compatible matching question is a purely combinatorial one, in the sense that it only depends on the two cyclic orders of $(1,2,\dots,n)$ given by the labelings rather than on the concrete positions of the points.
Using this fact, we determined $\ccm(n)$ for small values of $n$ by computing the largest compatible matching for each possible pair of labelings. 
The results of those computations are listed in Table~\ref{table:ccm}.

\begin{table}
\center
\begin{tabular}{ r | r | r | r | r | r | r | r | r | r | r | r | r | r }
    $n$ & \hspace{0.1cm} 4 &\hspace{0.1cm} 5 & \hspace{0.1cm} 6 & \hspace{0.1cm} 7 & \hspace{0.1cm} 8 & \hspace{0.1cm} 9 & \hspace{0.05cm} 10 & \hspace{0.05cm} 11 & \hspace{0.05cm} 12 & \hspace{0.05cm} 13 & \hspace{0.05cm} 14 & \hspace{0.05cm} 15 & \hspace{0.05cm} 16 \\ \hline
    $\ccm(n)$ & 2 & 2 & 2 & 3 & 3 & 4 & 4 & 4 & 5 & 5 & 6 & 6 & 6 \\
\end{tabular}
    \smallskip 
\caption{The values of $\ccm(n)$ for $n=4, \dots, 16$.} \label{table:ccm}
\end{table}

In the following, we present lower and upper bounds on the largest guaranteed size $\ccm(n)$ of a compatible matching of any two such sets. 
Starting with lower bounds, we present four pairwise incomparable results (\Cref{thm:2sets-lb}),
each of them giving rise to a polynomial-time algorithm for constructing a compatible $k$-matching with $k=\Omega(\sqrt n)$ edges.
The results are ordered by the size of the obtained compatible $k$-matching, where the last one gives the best lower bound for $\ccm(n)$, while the three other results yield compatible matchings of special structure.
The second result can be generalized to any number of (not necessarily convex) sets (\Cref{thm:lb}). 
We remark that the results in~\cite{Paths96} imply a lower bound of $\sqrt{2n}/2$ for the size of the largest compatible matching, which is weaker than the fourth result.

Before stating the theorem, we introduce the notion of a shape of a matching on a convex point set which, informally stated, captures ``how the matching looks''.
Consider a labeled point set $\pe$ and a plane matching $M$ on $\pe$. 
Let $\pe^M \subseteq \pe$ be the points of $\pe$ that are incident to an edge of $M$. 
The \emph{shape} of $M$ is the combinatorial embedding\footnote{The combinatorial embedding fixes the cyclic order of incident edges for each vertex.} of the union of $M$ and the boundary of the convex hull of $\pe^M$. 
Further, $M$ is called \emph{non-nested} if its shape is a cycle, that is, all edges of $M$ lie on the boundary of the convex hull of $\pe^M$. 
Note that the shape of $M$ also determines the number of its edges (even though some or all of the edges might be ``hidden'' in the boundary of the convex hull of $\pe^M$).
We say that two matchings have the \emph{same shape}, if their shapes are identical, possibly up to a reflection.

\begin{theorem}[Lower bound for two convex sets]
	\label{thm:2sets-lb}
For any two labeled convex sets $\pe_1$, $\pe_2$ of $n$ points each, it holds that:
\begin{enumerate}[{\bf(i)}]
  \item If $n\ge (2k-2)^2+2$ then for any shape of a $k$-matching there exists a compatible $k$-matching having that shape in both $\pe_1$ and $\pe_2$.
  \item If $n\ge k^2+2k-1$ then any maximal compatible matching consists of at least $k$ edges.
  \item If $n\ge k^2+k$ then there exists a compatible $k$-matching that is non-nested in both $\pe_1$ and $\pe_2$.
  \item If $n\ge \frac12k^2+k$ then there exists a compatible $k$-matching.
\end{enumerate}
\end{theorem}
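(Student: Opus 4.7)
The four parts address progressively more general settings, and I would prove them by four distinct arguments.

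For (i), I would use the Erd\H{o}s--Szekeres theorem. Writing $\pe_1$'s cyclic order as $1,\dots,n$ and $\pe_2$'s as a permutation $\pi$, Erd\H{o}s--Szekeres yields a monotone subsequence of length $2k-1$ once $n\ge(2k-2)^2+2$, giving $2k-1$ labels that appear in the same cyclic order --- up to reflection --- in both point sets. The hypothesis exceeds the Erd\H{o}s--Szekeres threshold by exactly one, leaving one additional label to serve as a $2k$-th ``pivot'' vertex; the nontrivial step is inserting it consistently into both cyclic orders. I would handle this by exploiting cyclicity, which gives freedom in the choice of starting point of the subsequence, and if necessary by choosing which vertex of the shape plays the role of the pivot adaptively to the available insertion positions. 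Once $2k$ aligned labels are in hand, realizing any prescribed shape on them is immediate.

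For (ii), I would use a pigeonhole argument over regions. A compatible matching $M$ of size $m$ consists of $m$ pairwise non-crossing chords in each of the convex polygons $\pe_1,\pe_2$, partitioning each into exactly $m+1$ regions. Each unmatched label $v$ is assigned a type $(R_1(v),R_2(v))\in[m+1]\times[m+1]$ recording the regions of $\pe_1$ and $\pe_2$ containing it. If two unmatched labels share a type, the chord between them lies in a single region of each polygon and hence crosses no chord of $M$ in either, so $M$ would extend to a compatible matching of size $m+1$, contradicting maximality. Pigeonhole then gives at most $(m+1)^2$ unmatched labels, whence $n\le 2m+(m+1)^2=m^2+4m+1$; setting $m\le k-1$ yields $n\le k^2+2k-2$, the contrapositive of (ii).

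For (iii), I would construct the non-nested matching iteratively. A non-nested matching has all its edges on the boundary of the convex hull of the matched labels in both drawings, so each new edge must be addable as a new boundary edge on both current convex hulls simultaneously. I would classify unmatched labels by which boundary ``gap'' they occupy on each of the two current convex hulls, and apply a pigeonhole argument in the same spirit as (ii) to guarantee a compatible boundary extension whenever enough free labels remain; a careful count of the usable gaps (taking into account that labels falling inside an existing matching edge's gap are ``trapped'' and cannot extend non-nestedly) yields the threshold $n\ge k^2+k$.

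For (iv), the non-nested restriction is dropped, and the bound improves by a factor of roughly two. I would proceed by induction on $k$: the inductive step picks a single chord $(a,b)$ and recurses on one of the (at most) four ``quadrants'' defined by the two sides of $(a,b)$ in $\pe_1$ and the two sides in $\pe_2$, requiring one quadrant to contain at least $(k^2-1)/2$ labels so that the inductive hypothesis yields a compatible $(k-1)$-matching inside it, which combines with $(a,b)$ to complete the argument. Naive averaging over the four quadrants loses the desired factor of two, so the tight bound relies on a more careful choice of $(a,b)$; scanning over $\pe_1$-adjacent candidate edges (which collapse the $\pe_1$-split to a single side, so only the $\pe_2$-split matters) and selecting one whose $\pe_2$-split is favourable is the natural approach. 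Controlling this mismatch between the $\pe_1$- and $\pe_2$-arc sizes of a single chord is the main obstacle, and this is the step I expect to require the most work.
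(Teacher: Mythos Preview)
Part (ii) is correct and matches the paper's argument exactly.

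Part (i): the paper applies the \emph{circular} Erd\H{o}s--Szekeres theorem of Czabarka and Wang, which directly yields $2k$ labels in the same cyclic order (up to reflection) in both sets once $n\ge(2k-2)^2+2$. Your route via linear Erd\H{o}s--Szekeres gives only $2k-1$ aligned labels, and the step ``insert a $2k$-th pivot by exploiting cyclicity'' is precisely the nontrivial content of the circular strengthening; you have not actually carried it out, and there is no obvious reason the extra label can be inserted consistently into both cyclic orders.

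Part (iii): the paper's argument is entirely different from yours. It partitions the labels into $k$ contiguous blocks of size $k+1$ along $\pe_2$, then sweeps the labels in $\pe_1$-order; whenever two swept labels land in the same block, it commits that edge, discards the remaining swept labels, and discards the rest of that block. Each committed edge discards at most one label from every other block, so $k+1$ labels per block suffice for $k$ edges, and the construction is automatically non-nested in both sets. Your iterative pigeonhole on gap-arcs is a plausible alternative, but the ``trapping'' phenomenon you mention is a genuine obstruction: every edge you add in a gap can trap arbitrarily many live labels beneath it, so the number of live labels is not controlled by $n-2m$ alone, and it is not clear the count closes at $k^2+k$.

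Part (iv): your high-level plan (find a cheap edge, discard a few labels, recurse) matches the paper, but your concrete proposal of restricting to $\pe_1$-adjacent chords and then seeking a favourable $\pe_2$-split does not work. Take $\pi^{-1}(i)\equiv c\,i\pmod n$ with $\gcd(c,n)=1$ and $c\approx n/2$; then \emph{every} $\pe_1$-adjacent pair is at cyclic distance $c\approx n/2$ in $\pe_2$, so no $\pe_1$-adjacent chord has a small $\pe_2$-side. The paper's device is to allow both arcs to be short simultaneously: view $\pi$ as an $n\times n$ permutation matrix, place an $L_1$-ball of radius $r$ around each $1$-cell, and use area-pigeonhole ($n(2r^2+2r+1)>n^2$ when $n\le 2r^2+2r$) to find two $1$-cells at $L_1$-distance at most $2r$. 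This yields two labels whose shorter arcs in $\pe_1$ and $\pe_2$ together contain at most $2r-2$ other labels; match them, discard those, and induct.
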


\renewcommand\endproof{\oldendproof}
\begin{proof} \hfill
\begin{enumerate}[{\bf(i)}]
\item By the circular Erd\H os-Szekeres Theorem~\cite{EScylic19}, the permutation $\pi$ contains a monotone subsequence $\sigma$ having length $2k$. The sequence $S=\{x_i | i\in\sigma\}$ of points whose labels belong to $\sigma$ has the same cyclic order in both sets $\pe_1$, $\pe_2$ (possibly once clockwise and once counter-clockwise), hence any plane matching on $S$ in $\pe_1$ is also plane in $\pe_2$ and has the same shape.

\item \label{case:2sets-lb-2} Suppose we have already found a compatible matching $M$ consisting of $m\le k-1$ edges.
This leaves at least $n - 2m \ge k^2+1$
points yet unmatched.
The unmatched points are split by the $m$ matching edges into at most $m+1\le k$ subsets, both in $\pe_1$ and in $\pe_2$. Since there are at most $k^2$ different ways to choose one such subset from $\pe_1$ and one from $\pe_2$, there exist 
two yet unmatched points $x$, $y$ that lie in the same subset in $\pe_1$ and in the same subset in $\pe_2$. Hence $xy$ can be added to the matching $M$.

\item This claim is equivalent to Problem 5 given at IMO 2017.\footnote{\url{https://www.imo-official.org/problems/IMO2017SL.pdf}, Problem C4.} 
For completeness we sketch a proof (see~\cref{fig:2convex-lb}):
		split the perimeter of $\pe_2$ into $k$ contiguous blocks $B_1,\dots,B_k$ consisting of $k+1$ points each (that is, block $B_1$ consists of points labeled $\pi(1),\dots,\pi(k+1)$ and so on).
		We aim to draw one matching edge per block. We process points $x_i$ in order $i=1,\dots,n$ in which they appear in $\pe_1$. Once some block, say $B^\star$, contains two processed points, say $x_u$ and $x_v$, we draw edge $x_ux_v$, discard other already processed points and discard other points in $B^\star$.
In this way, any time we draw an edge in some block, we discard at most one point from each other block. Since each block initially contains $k+1$ points, we will eventually draw one edge in each block.
The produced matching contains one edge per block, hence it is non-nested in $\pe_2$. Since points $x_i$ are processed in order $i=1,\dots,n$, the matching is also non-nested in $\pe_1$.

\begin{figure}[h]
\centering
\includegraphics[scale=0.85]{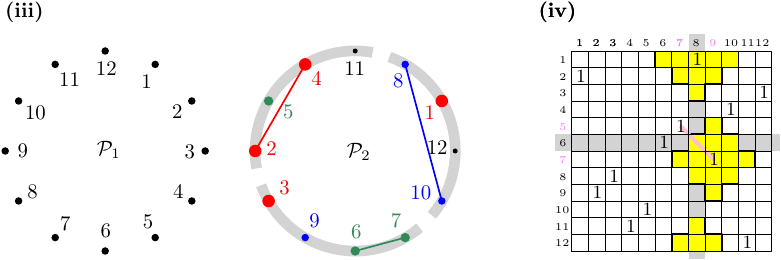}
\caption{Theorem~\ref{thm:2sets-lb}, Claim (iii): Illustration with $n=12$ points and $k=3$ blocks (grey). After drawing an edge we switch the color of processed points (red to green to blue). Claim (iv): The permutation matrix $\Pi$ and two $2$-balls (yellow). A 2-ball centered at $[5,7]$ would intersect a 2-ball at $[7,9]$, so drawing the edge between points labeled 7, 9 forces us to discard at most 2 other points (6 and~8).
}
\label{fig:2convex-lb}
\end{figure}

\item The idea is to find two points $x_i$, $x_j$ that are close to each other in the cyclic order in both $\pe_1$ and $\pe_2$. 
The bound is then established by drawing the edge $x_ix_j$, omitting all points on the shorter arcs of $x_ix_j$ in both $\pe_1$ and $\pe_2$, and proceeding recursively.

Consider the permutation matrix $\Pi$ given by $\pi$, that is, an $n\times n$ matrix such that $\Pi_{i,j}=1$ if $\pi(i)=j$ and 0 otherwise.
Given an integer $r>0$ and a %center 
cell $\Pi_{i,j}$ containing a digit~1, the \emph{$r$-ball} centered at $\Pi_{i,j}$ is a set $B(\Pi_{i,j},r)=\{\Pi_{u,v} \;:\; |i-u|+|j-v|\le r\}$ of cells whose $L_1$-distance from $\Pi_{i,j}$ is at most $r$, where all indices are considered cyclically modulo~$n$ (see~\cref{fig:2convex-lb}).
Note that an $r$-ball contains $2r^2+2r+1$ cells.

Now suppose $n$ and $r$ satisfy $n\le 2r^2+2r$ and consider $r$-balls centered at all $n$ cells containing a digit~1. The balls in total cover $n\cdot (2r^2+2r+1) >n^2$ cells, hence some two $r$-balls intersect and their centers $\Pi_{i,\pi(i)}$, $\Pi_{j,\pi(j)}$ have $L_1$-distance at most $2r$. This means that the shorter arcs between points labeled $\pi(i)$ and $\pi(j)$ contain, together in both point sets $\pe_1$ and $\pe_2$, at most $2r-2$ other points. Drawing an edge $\pi(i)\pi(j)$ and removing these $2r-2$ other points leaves convex sets in both $\pe_1$ and $\pe_2$ whose convex hulls do not intersect the matched edge $\pi(i)\pi(j)$. 

The rest is induction. The claim holds for $k\in\{1,2\}$. 
Suppose that $k=2r$ is even and that $n= \frac12k^2+k=2r^2+2r$. By the above argument, find a ``short'' edge $x_ix_j$ and remove up to $2r-2$ other points. This leaves $n-2r$ ($<2r^2+2r$) points, so find another edge $x_ux_v$ and remove up to $2r-2$ other points. This leaves $2r^2-2r = 2(r-1)^2+2(r-1)$ points and the induction applies. Last, note that the above shows that having $2r^2$ points implies a $(2r-1)$-matching. Since $2r^2=\lceil \frac{1}{2}(2r-1)^2+2r-1 \rceil$, the case of $k=2r-1$ odd and $n=\lceil \frac{1}{2}(2r-1)^2+2r-1 \rceil$ is also settled. 
		\hfill$\qed$
\end{enumerate}%
\end{proof}

For the remainder of this section, we consider upper bounds on the size of compatible matchings for pairs of convex point sets.

We first describe an explicit construction of two labeled point sets $\pe_{\id}$ and~$\pe_{\pi}$, 
where $n$ is a multiple of 10, the set $\pe_{\id}$ is labeled $(1,2,\ldots,n)$  in clockwise order, 
and the set $\pe_{\pi}$ is labeled $(\pi(1),\pi(2),\ldots,\pi(n))$ in clockwise order, 
by defining a specific permutation $\pi\colon[n]\to[n]$.
We will show that any compatible matching of $\pe_{\id}$ and $\pe_{\pi}$ misses at least $n/5$ of the points.

Our building block for $\pi$ is the permutation $(2,4,1,5,3)$ of five elements. For labeling the $n=5k$ points of $\pe_{\pi}$ (with $k\geq 2$ even) we use the permutation
$\pi = (2,4,1,5,3, \ 7,9,6,10,8, \ \dots, 5(k\!-\!1)\!+\!2,5(k\!-\!1)\!+\!4,5(k\!-\!1)\!+\!1,5(k\!-\!1)\!+\!5,5(k\!-\!1)\!+\!3)$ 
that yields $k$ blocks of 5 points each in both $\pe_1$ and $\pe_2$ (see \cref{fig:5n-construction}). 

\begin{figure}[htb]
\centering
\includegraphics[scale=0.85]{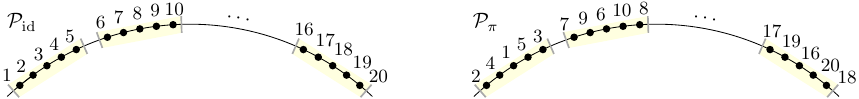}
	\caption{The two labeled point sets $\pe_{\id}$ and $\pe_{\pi}$ for the permutation $\pi$.}
\label{fig:5n-construction}
\end{figure}

\begin{restatable}[Constructive upper bound for two convex sets]{proposition}{twoconstrUB}\label{thm:2sets-c-ub} 
	The largest compatible matching of the two labeled $n$-point sets $\pe_{\id}$ and $\pe_{\pi}$ defined above %and illustrated in \cref{fig:5n-construction} 
	contains $\frac{2}{5} n$ edges.
\end{restatable}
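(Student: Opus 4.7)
My plan is to prove the equality $|M_{\max}|=2n/5$ in two directions: a direct construction for the lower bound, and a per-block charging argument for the upper bound.

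For the lower bound, I exhibit the matching
\[
M_0 \;=\; \bigcup_{j=0}^{k-1} \bigl\{ (5j+2,\,5j+3),\, (5j+4,\,5j+5) \bigr\},
\]
of size $2k = 2n/5$. Within block $j$, both edges lie on adjacent positions in $\pe_{\id}$; in $\pe_{\pi}$ they occupy the positions $(5j+1,5j+5)$ and $(5j+2,5j+4)$, with the second nested inside the first. Because each block occupies a contiguous cyclic arc of five positions in both $\pe_{\id}$ and $\pe_\pi$, edges from distinct blocks are confined to disjoint arcs and cannot cross.

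For the upper bound, fix any compatible matching $M$; for every block $B_j$ set $x_j$ to be the number of intra-block edges of $M$ in $B_j$ and $z_j$ to be the number of inter-block edges of $M$ incident to $B_j$. Since every inter-block edge is counted in exactly two of the $z_j$'s, we have $|M| = \sum_j x_j + \tfrac12 \sum_j z_j$. The crux is the local inequality
\[
x_j + \tfrac12 z_j \;\leq\; 2 \qquad \text{for every block } B_j,
\]
whose sum yields $|M|\leq 2k = 2n/5$. Since the within-block permutation is $(2,4,1,5,3)$ for every block (up to an additive shift of $5j$), it suffices to establish the inequality for $B_0$ and appeal to cyclic symmetry. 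I split on $x_0 \in \{0,1,2\}$. For $x_0 = 2$, I enumerate the six compatible intra-block $2$-matchings of $B_0$ and observe that in each case the unique unused point is separated from every point outside $B_0$ by one of the two intra-block chords in at least one of the two orderings; hence $z_0 = 0$. For $x_0 = 1$, I go through the ten possible intra-block chords and check that in each case at most two of the three remaining points lie on the outside arc of the chord in \emph{both} $\pe_{\id}$ and $\pe_\pi$ simultaneously; hence $z_0 \leq 2$.

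The principal obstacle is the case $x_0 = 0$, where I have to rule out $z_0 = 5$, i.e.~all five points of $B_0$ being endpoints of inter-block edges. Such a configuration forces two simultaneous nested layouts: the five outer endpoints $w_1,\dots,w_5$ paired with the inner labels $1,\dots,5$ must satisfy
\[
w_1 > w_2 > w_3 > w_4 > w_5 \quad\text{in }\pe_{\id},
\]
and
\[
\pi^{-1}(w_3) < \pi^{-1}(w_5) < \pi^{-1}(w_1) < \pi^{-1}(w_4) < \pi^{-1}(w_2) \quad\text{in }\pe_\pi.
\]
Comparing these orderings block by block forces $w_1,\dots,w_5$ into a single block $B_M$; but then the within-block permutation $(2,4,1,5,3)$ applied to $\{5M+1,\dots,5M+5\}$ violates the required $\pe_\pi$-position order, yielding a contradiction and $z_0\leq 4$. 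Combined with the two easy cases this establishes $x_j + \tfrac12 z_j \leq 2$ for every block, closing the upper bound proof.
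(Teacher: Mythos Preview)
Your proof is correct and follows essentially the same route as the paper's. Both arguments reduce to the per-block claim that each block $B_j$ has at most four matched points: the paper states this as ``at least one unmatched point per block'' and splits into \emph{Case 1} ($B$ contains a short edge) versus \emph{Case 2} (all five points matched by long edges), whereas you phrase it as the equivalent inequality $x_j+\tfrac12 z_j\le 2$ and split on $x_0\in\{0,1,2\}$. Your $x_0=1,2$ cases together cover the paper's Case~1 (both enumerate the ten intra-block chords and check which remaining points are trapped), and your $x_0=0$, $z_0=5$ analysis mirrors the paper's Case~2 almost exactly: the paper argues via the pairs $(2,1),(1,4),(4,3),(3,5)$ that all five long-edge partners must land in a single block $B'$, then checks the unique non-crossing perfect matching between $B_0$ and $B'$ in $\pe_{\id}$ fails in $\pe_\pi$; you derive the two nested orderings explicitly, use the same four ``inversion'' pairs to force a common block, and then observe that the resulting forced assignment $w_i=5M+(6-i)$ violates the $\pe_\pi$-order. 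Your lower-bound matching $\{(5j{+}2,5j{+}3),(5j{+}4,5j{+}5)\}$ differs from the paper's $\{(5j{+}2,5j{+}4),(5j{+}1,5j{+}5)\}$ but is equally valid.
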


\begin{proof}
Consider any compatible matching $M$ of $\pe_{\id}$ and $\pe_{\pi}$.
We show that any compatible matching misses at least one point within each of the $k$ blocks.
	This gives $\frac{n}{5}$ unmatched points and thus at most $\frac{2}{5} n$ edges in any compatible matching. 

We classify the edges of any compatible matching $M$ into two types: those that connect two points in one block (that is, edges with both labels in $\{i\!+\!1, \linebreak i\!+\!2,i\!+\!3,i\!+\!4, i\!+\!5\}$ for some $0\le i \le k-1$; we call them \textit{short} edges) and all other edges, which connect two points from different blocks (we call them \emph{long} edges). To show that $M$ misses at least one point of each block~$B$, we distinguish two cases: 
\begin{description}
  \item[Case 1:] \emph{$B$ contains at least one short edge.}

  We show that there is always at least one unmatched point in~$B$.
	W.l.o.g., let $B$ be the block with labels $1, \ldots, 5$. 
		For each of the possible $\binom{5}{2}=10$ short edges, we will verify 
		that if we include it, then we inevitably obtain a point of~$B$ that can not be matched in either $\pe_1$ or $\pe_2$   (see \cref{fig:5n-construction}). 
	We argue by contradiction: suppose $M$ matches all points of~$B$ and~$B$ contains a short edge.
	First, if a short edge cuts off an odd number of points in $B$ in either $\pe_1$ or $\pe_2$, then one of these points is unmatched.
	This is the case for the edges $(1,3), (1,5), (2,4)$ and $(3,5)$ in $\pe_1$
	and for the edges $(1,2), (2,3)$ and $(4,5)$ in~$\pe_2$. So none of these edges can be a short edge in $B$ such that all points of $B$ are matched.
	Second, if $B$ contains the edge $(1,4)$, then the edge $(2,3)$ has to be in~$M$ due to $\pe_1$ and is therefore also a short edge in $B$. This is a contradiction since the edge $(2,3)$ cannot be a short edge in~$B$.
	Third, if $B$ contains the edge $(2,5)$, then the edge $(3,4)$ has to be in~$M$ due to $\pe_1$. This is a contradiction since the edges $(2,5)$ and $(3,4)$ are not compatible due to~$\pe_2$.
	And finally, if~$B$ contains the edge $(3,4)$, then the edge $(1,5)$ has to be in~$M$ due to $\pe_2$ and is therefore also a short edge in~$B$. This is a contradiction since the edge $(1,5)$ cannot be a short edge in~$B$.
	This completes the proof that if~$B$ contains a short edge, then $M$ does not match all points of~$B$.
		
  \item[Case 2:]
	\emph{All five points in~$B$ are matched by a long edge.} 
	
	We argue that, under the assumption that all five points in~$B$ are matched (by a long edge), 
	all those five edges in fact must go to the same block~$B'$, which we then show to be impossible. 
	Consider a pair of numbers $a$, $b$ that lie in the same block whose relative position within that block is different in~$\pe_1$ and in $\pe_2$ (for example, 1 and 2 but not 1 and 3). Suppose~$b$ is matched to~$b'$. Then $a$ has to be matched to a point on the same side of the line $bb'$ as~$a$, in both $\pe_1$ and $\pe_2$. This is impossible unless $a$ is matched to a point in the same block as $b'$ that moreover happens to lie on the correct side of $b'$ in both $\pe_1$ and $\pe_2$. Hence for any such pair~$a$, $b$, the points~$a$ and $b$ are matched to the same block. It remains to notice that $(2,1),(1,4),(4,3),(3,5)$ are all eligible $(a,b)$ pairs, hence all five edges go to the same block $B'$. However, there is only one non-crossing perfect matching of $B$ and $B'$ in $\pe_1$ and we easily check that it is not compatible with $\pe_2$. 
\end{description}

	To see that the bound is tight, note that within each block of $\pe_{\pi}$ we can match the first two points and the next two points.
	This yields a compatible matching of $\pe_{\id}$ and $\pe_{\pi}$ with $2k=\frac{2}{5}n$ edges consisting only of short edges.
	\hfill$\qed$
\end{proof}

The above construction 
yields an upper bound of $\ccm(n) \leq \lceil \frac{2}{5} n \rceil$. However, this bound is not %at all 
tight.
We next show in a probabilistic way that there exists a permutation $\pi\colon [n]\to[n]$ for which the largest compatible matching consists of $k=\bigO(n^{2/3})$ edges. 
In \cref{sec:general}, we will extend this approach to any number of point sets, not necessarily in convex position (\cref{thm:n-ub}).

\begin{restatable}[Probabilistic upper bound for two convex sets]{theorem}{twononconstrUB}\label{thm:2sets-n-ub}
Fix $n$ and let $k=4n^{2/3}$. 
Then two convex sets $P_1$, $P_2$ of $n$ points each can be labeled such that the largest compatible matching consists of fewer than $k$ edges.
\end{restatable}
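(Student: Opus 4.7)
The plan is to use the probabilistic method. Fix $\pe_1$ as the convex $n$-gon with clockwise labels $(1,2,\dots,n)$, and let $\pe_2$ be a convex $n$-gon with clockwise labels $(\pi(1),\dots,\pi(n))$, where $\pi \in S_n$ is chosen uniformly at random. Let $X$ be the number of $k$-matchings on $[n]$ that are non-crossing on both $\pe_1$ and $\pe_2$. I will show that $\mathbb{E}[X]<1$ for $k=4n^{2/3}$; the first-moment principle then yields a permutation $\pi$ witnessing $X=0$, i.e.\ no compatible $k$-matching exists.

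First I would count the $k$-matchings that are non-crossing on $\pe_1$. Picking the $2k$ labels to be matched gives $\binom{n}{2k}$ choices, and on any fixed $2k$-subset of a labeled convex point set there are exactly $C_k=\frac{1}{k+1}\binom{2k}{k}$ non-crossing perfect matchings (the classical Catalan count). So there are $\binom{n}{2k}\,C_k$ candidate matchings. Next, for any one such matching $M$, whether $M$ is non-crossing on $\pe_2$ depends only on the relative cyclic order of its $2k$ endpoints on $\pe_2$, which for uniformly random $\pi$ is itself uniformly random over all $(2k)!$ linear arrangements. Using the $S_{2k}$-symmetry of the relabeling action, the matching $M$ induces a uniformly random perfect matching among the $(2k-1)!!$ perfect matchings of the $2k$ occupied positions, exactly $C_k$ of which are non-crossing. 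Hence
\[
\Pr[M\text{ is non-crossing on }\pe_2]\;=\;\frac{C_k}{(2k-1)!!}\;=\;\frac{2^k}{(k+1)!},
\]
and by linearity of expectation
\[
\mathbb{E}[X]\;=\;\binom{n}{2k}\,C_k\cdot\frac{2^k}{(k+1)!}.
\]

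Finally I would bound this expression and verify it drops below $1$. Using $\binom{n}{2k}\le n^{2k}/(2k)!$, $C_k\le 4^k$, and Stirling on $(2k)!$ and $(k+1)!$, the right-hand side simplifies to $\bigl(cn^2/k^3\bigr)^k$ for an absolute constant $c$ (essentially $2e^3$). Substituting $k=4n^{2/3}$ gives $cn^2/k^3 = c/64 < 1$, so $\mathbb{E}[X]<1$ and the desired labeling $\pi$ exists. The main obstacle is purely computational: tracking the constants in the Stirling estimates cleanly enough that the prefactor $4$ in $k=4n^{2/3}$ is indeed sufficient. Everything else—the Catalan counting on $\pe_1$, the symmetry argument giving $C_k/(2k-1)!!$ on $\pe_2$, and the union bound—is straightforward.
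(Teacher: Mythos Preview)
Your argument is correct and is essentially the paper's proof rephrased in probabilistic language: the paper counts labelings of $P_2$ admitting a compatible $k$-matching via $(f(k))^2 g(k)$ and compares to $n!$, and this ratio is exactly your $\mathbb{E}[X]=\binom{n}{2k}C_k\cdot 2^k/(k+1)!$, leading to the identical bound $\bigl(2e^3 n^2/k^3\bigr)^k<1$ for $k=4n^{2/3}$. Your symmetry computation $\Pr[M\text{ plane on }\pe_2]=C_k/(2k-1)!!$ is a clean way to arrive at the same quantity.
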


\begin{proof} 
Let $\pe_1$ be $P_1$ with labeling $(1,2,\dots,n)$ in clockwise order and let $P_2$ not yet be labeled.
The idea for this proof is that for large $n$ there are more ways to label $P_2$ than there are ways to draw a compatible $k$-matching.

For any $k\le n$, let $f(k)$ be the number of plane $k$-matchings of $P_i$, $i \in \{1,2\}$ 
	(that is, matchings leaving $n-2k$ points unmatched). 
	As there are $\binom{n}{2k}$ ways to select the $2k$ points to be matched and
	the number of plane perfect matching on those points is $\frac{1}{k+1}\binom{2k}{k}$ (the $k$-th Catalan number),
 we obtain
	\[f(k) = \binom{n}{2k}\cdot \frac{1}{k+1}\binom{2k}{k} \le \frac{n!}{(n-2k)!\cdot k!\cdot k!} \ .\]

Given two plane $k$-matchings, one of $\pe_1$ and one of $P_2$, 
	there are exactly $g(k) = (n-2k)!\cdot k!\cdot 2^{k}$
	labelings of $P_2$ for which those two matchings constitute a compatible $k$-matching:
	there are $(n-2k)!$ ways to label the unmatched points of~$P_2$, $k!$ ways to pair up the matching edges and $2^k$ ways to label their endpoints.

	Therefore, $(f(k))^2 g(k)$ is an upper bound for the number of labelings $\pi$ of $P_2$ such that there is a compatible $k$-matching for $\pe_1$ and $\pe_2$ ($P_2$ with labeling $\pi$).
	On the other hand, there are $n!$ labelings of $P_2$ in total.

	Our goal is to show that 
	$(f(k))^2\cdot g(k) < n! $.
	If we succeed, then there exists a labeling $\pi$ of $P_2$ such that there is no compatible $k$-matching for $\pe_1$ and $\pe_2$ ($P_2$ with labeling $\pi$).
	Canceling some of the factorials and using standard bounds $(n/\e )^n<n!<n^n$ on the remaining ones (where $\e$ %\approx 2.71828$ 
	denotes Euler's number), we obtain 
	\begin{eqnarray*} \frac{(f(k))^2\cdot g(k)}{n!} \le \frac{n! \cdot 2^k}{(n-2k)!\cdot(k!)^3} \le \frac{n^{2k}\cdot 2^k}{(k/\e )^{3k}}& = & \left( \frac{2\e ^3 n^2}{k^3} \right)^k. \end{eqnarray*}
	For $k\geq 4 n^{2/3}$, the above expression is less than one (we have $2\e ^3<4^3$), which completes the proof. 
	\hfill$\qed$
\end{proof}

\section{Generalized and Multiple Sets}
\label{sec:general}

In this section we generalize our results in two ways, by considering point sets in general position and more than two sets. 
We again start with lower bounds. 
\Cref{thm:lb}, which is a generalization of the second result of \Cref{thm:2sets-lb}, implies that for any $\sets$-tuple of point sets $P_1,\dots,P_{\sets}$ we have $\cmk(n;P_1,\dots,P_{\sets})=\Omega(n^{1/\sets})$.

\begin{theorem}[Lower bound for multiple sets]
\label{thm:lb}
Let $\pe_1, \pe_2,\dots,\pe_{\sets}$ be labeled sets of $n$ points each.
If $n\ge k^{\sets}+2k-1$, then any maximal compatible matching consists of at least $k$ edges.
\end{theorem}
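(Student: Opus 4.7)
The plan is to generalize the pigeonhole argument from case~(ii) of Theorem~\ref{thm:2sets-lb} from two convex sets to $\sets$ point sets in general position. Suppose for contradiction that $M$ is a maximal compatible matching with $|M|=m\le k-1$; then the number of unmatched points is at least $n-2m \ge (k^\sets+2k-1)-2(k-1)= k^\sets+1$.

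The key step is to show that for each $\pe_i$ the unmatched points can be partitioned into at most $m+1\le k$ classes so that any two unmatched points in the same class are joined by a straight segment in $\pe_i$ that avoids every edge of $M$. Granting this, the joint partition across the $\sets$ sets has at most $(m+1)^\sets\le k^\sets$ cells. Pigeonhole on the $\ge k^\sets+1$ unmatched points then yields two unmatched points $x,y$ sharing a cell in every $\pe_i$; the edge $xy$ therefore extends $M$, contradicting maximality.

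The main obstacle is establishing the partition in general position. For $\pe_i$ in convex position it is immediate as in case~(ii): the $m$ matching edges are non-crossing chords of the convex hull and subdivide it into $m+1$ convex sub-polygons with the required visibility property. In general position, matching edges with endpoints interior to the convex hull of $\pe_i$ may be ``dangling'' and do not on their own bound regions. The approach I would take is to augment the drawing of $M$ inside the convex hull of $\pe_i$ with auxiliary straight chords that extend each interior matching endpoint to the hull boundary, chosen greedily so that no two extensions, and no extension together with an edge of $M$, cross. An Euler-formula count then shows that the resulting plane graph subdivides the convex hull of $\pe_i$ into exactly $m+1$ faces; choosing the extensions carefully makes each face convex, so two unmatched points lying in the same face are joined by a straight segment within it. Assigning each unmatched point to the face containing it provides the required partition, and the pigeonhole step concludes the argument exactly as in the two-set convex case.
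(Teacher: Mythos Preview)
Your overall pigeonhole strategy is correct and matches the paper's. The gap is in your construction of the convex partition for sets in general position. You propose to attach, at each interior matching endpoint, a straight segment out to the hull boundary, chosen to avoid crossings, and you assert that a careful choice makes every face convex. But convexity at a degree-two interior vertex forces the added segment to be \emph{collinear} with the matching edge incident there (otherwise one of the two incident faces has a reflex angle at that vertex), and a collinear extension may meet another matching edge before reaching the hull boundary. So the three requirements---reach the hull boundary, avoid all crossings, and produce convex faces---are in general mutually inconsistent, and the assertion that such extensions exist is not justified.

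The paper's construction sidesteps this cleanly: process the $m$ matching edges one by one and extend each along its supporting line in both directions only until it hits a previously processed matching edge or extension (not necessarily all the way to the hull boundary). Since the matching is plane, the edge being processed lies inside a single current region, and its extension is a chord of that region; hence it splits one convex region into two convex regions. After $m$ steps each of the $\sets$ planes is partitioned into exactly $m+1\le k$ convex regions, and your pigeonhole step then finishes the proof. Replacing ``to the hull boundary'' by ``until it meets something already drawn'' repairs your argument and also makes the Euler-formula detour unnecessary.
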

\begin{proof} We extend the idea from the proof of \cref{thm:2sets-lb}, part~(\ref{case:2sets-lb-2}): 
 suppose we have already found a compatible matching $M$ consisting of $m\le k-1$  edges.
This leaves at least $k^{\sets}+2k-1 - m \ge k^{\sets}+2k-1 - 2(k-1) = k^{\sets}+1$ points yet unmatched.
Imagine the $\sets$ point sets live in $\sets$ different planes. We process the $m$ matching edges one by one. When an edge is processed, we extend it along its line in both directions until it hits another matching edge or an extension of a previously processed edge (in all $\sets$ planes). In this way, the $m$ lines partition each plane into $m+1\le k$ convex regions. By simple counting ($k^{\sets}+1>k^{\sets}$), there exist two yet unmatched points $x$, $y$ that lie in the same region in each of the $\sets$ planes. 
Hence $xy$ can be added to the matching $M$. 
\hfill$\qed$
\end{proof}

Regarding upper bounds, the following theorem 
implies that for any fixed $\sets$ and any \mbox{$\sets$-tuple} of 
point sets $P_1,\dots,P_{\sets}$, we have $\cmk(n;P_1,\dots,P_{\sets})= \bigO(n^{2/(\sets+1)})$.

\begin{restatable}[Probabilistic upper bound for multiple sets]{theorem}{nonconstUBmultple}\label{thm:n-ub}
Fix $n$ and $\sets$ and let $k=125\cdot n^{2/(\sets+1)}$. Then any $\sets$ sets $P_1,\dots,P_{\sets}$ of $n$ points each, where each $P_i$ is in general position, can be labeled such that the largest compatible matching consists of fewer than $k$ edges.
\end{restatable}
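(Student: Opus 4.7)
Plan: Follow the probabilistic counting of \cref{thm:2sets-n-ub}, generalised to $\sets$ sets in general position. Fix the labeling of $P_1$ arbitrarily and select $\pi_2,\dots,\pi_{\sets}$ independently and uniformly at random; I want to show that with positive probability no abstract $k$-matching $M$ on $[n]$ is plane on every $P_i$ simultaneously.

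The first ingredient replaces the Catalan-number count used in the convex case with a general-position upper bound on $f(k)$, the number of plane $k$-matchings on $n$ points. Using a standard estimate of the form ``at most $C^{m}$ crossing-free perfect matchings on $m$ points in general position'' (e.g.\ Sharir--Welzl), one obtains $f(k)\le \binom{n}{2k}\cdot C^{2k}$. The rest of the enumeration is identical to \cref{thm:2sets-n-ub}: for any fixed abstract $k$-matching $M$, the number of labelings of a single $P_i$ under which $M$ is drawn plane is at most $f(k)\cdot g(k)$, where $g(k)=(n-2k)!\cdot k!\cdot 2^{k}$ accounts for labels of unmatched points, pairings of plane edges of $P_i$ with label pairs of $M$, and the two orientations of each such pair. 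Union-bounding over the at most $f(k)$ abstract matchings $M$ that are already plane on $\pe_1$ and multiplying independently over $P_2,\dots,P_{\sets}$, the number of ``bad'' $(\sets-1)$-tuples $(\pi_2,\dots,\pi_{\sets})$ admitting a compatible $k$-matching is at most $f(k)^{\sets}\cdot g(k)^{\sets-1}$, which I compare against the $(n!)^{\sets-1}$ total tuples.

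The remainder is a calculation: substitute the bound on $f(k)$, apply $n!/(n-2k)!\le n^{2k}$, and apply Stirling to $k!$ and $(2k)!$; the whole ratio collapses into one exponential of the form $\bigl(n^{2}\cdot \Gamma_{\sets}/k^{\sets+1}\bigr)^{k}$, where $\Gamma_{\sets}$ depends only on $\sets$ and on $C$. The choice $k = 125\,n^{2/(\sets+1)}$ is tailored exactly to cancel the $n^{2}$ factor so that the base stays strictly below $1$. The main obstacle is securing the constant $125$ \emph{uniformly} in $\sets$: since $\Gamma_{\sets}^{1/(\sets+1)}$ tends to a quantity of order $(Ce)^{2}$ as $\sets\to\infty$, the calculation closes only if the bound on plane matchings comes from a sufficiently sharp estimate and the Stirling inequalities are handled without slack. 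The convex case enjoyed the sharp value $C=2$ via Catalan; here the weaker general-position constant must still be compatible with the same exponent $2/(\sets+1)$ and the specific prefactor $125$.
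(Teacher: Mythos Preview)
Your proposal is correct and follows essentially the same route as the paper: fix the labeling of $P_1$, bound the number $f(k)$ of plane $k$-matchings on each $P_i$ via a known exponential bound on crossing-free perfect matchings, set $g(k)=(n-2k)!\,k!\,2^{k}$ per set, and compare $f(k)^{\sets}\cdot g(k)^{\sets-1}$ against $(n!)^{\sets-1}$; after the same factorial estimates the ratio becomes $\bigl((\text{const})^{\sets}\cdot n^{2}/k^{\sets+1}\bigr)^{k}$, which is $<1$ once $k=125\,n^{2/(\sets+1)}$. The only point where the paper is more specific is precisely the one you flagged: instead of the Sharir--Welzl $O(10.05^{N})$ bound (which is asymptotically sharper but carries an implicit constant and would not yield the prefactor $125$), the paper combines an explicit inequality of Sharir--Sheffer--Welzl with the Sharir--Sheffer bound $tr(P)\le 30^{|P|}$ to obtain an explicit numerical bound that makes the base $16.605\,\e^{2}<125$ and hence closes the computation with $k\ge 125$.
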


This theorem can be proven by extending the idea from the proof of \cref{thm:2sets-n-ub} and combining results of Sharir, Sheffer and Welzl~\cite{SHARIR2013777} and Sharir and Sheffer~\cite{SharirS11} on the number of triangulations and plane perfect matchings.

\begin{proof}
	Let $\pe_1$ be $P_1$ with labeling $(1,2,\dots,n)$ in clockwise order and suppose that the remaining $\sets-1$ sets are not yet labeled.
	Let $k\le n$ and let $f_i(k)$, $1 \le i \le \sets$, be the number of $k$-matchings of $P_i$.
Sharir, Sheffer and Welzl showed in~\cite{SHARIR2013777} that the number of plane perfect matchings of any set $P$ of $k$ points in general position is at most $8\cdot(3/2)^{(k/4)}\cdot tr(P)$, where $tr(P)$ denotes the number of triangulations of $P$.
Sharir and Sheffer also showed in~\cite{SharirS11} that the number of triangulations of $P$ 
is at most $30^k$.
This implies that there are also at most $8\cdot33.21^k$ different perfect matchings of $P$. 
By counting this upper bound for every possible $k$-point subset of $P_i$, we obtain
	\begin{eqnarray*} f_i(k)  \le  &  \binom{n}{2k}\cdot 8 \cdot 33.21^k. \end{eqnarray*}

Next, consider an $\sets$-tuple $(M_1,\dots,M_{\sets})$ of matchings on $P_1,\dots,P_{\sets}$, respectively, consisting of $k$ edges each. Notice that any such $\sets$-tuple forms a compatible matching for 
	\begin{eqnarray*} g(k) & = & \left( (n-2k)!\cdot k!\cdot 2^{k} \right)^{\sets-1} \end{eqnarray*}
combinations of labelings for $P_2,\dots,P_{\sets}$.
On the other hand, there are $( n!)^{\sets-1}$ such combinations of labelings for $P_2,\ldots,P_\sets$ in total.
It suffices to show that 
	\begin{eqnarray*}\label{eq:prob_many} \left(\prod_{i=1}^{\sets} f_i(k)\right) \cdot g(k)  & < & ( n!)^{\sets-1} \end{eqnarray*}
for guaranteeing the existence of a combination of labelings for $P_2,\ldots,P_\sets$ such that there is no compatible $k$-matching for the resulting labeled sets $\pe_1, \ldots, \pe_\sets$.
As before, we expand the binomials into factorials, cancel some of them and use standard bounds $(n/e)^n<n!<n^n$ on the remaining ones (where $e$ again denotes Euler's number) to obtain
\begin{eqnarray*}
	\frac{\left(\prod_{i=1}^{\sets} f_i(k)\right) \cdot g(k) }{ ( n!)^{\sets-1}}
	&  =  &  \frac{ n!\cdot (8\cdot {33.21}^k)^{\sets} (k!)^{\sets-1}(2^k)^{\sets-1}}{(n-2k)! ((2k)!)^{\sets}} \\
	& \le &  8^{{\sets}} \frac{ n^{2k} \cdot (66.42^{\sets})^k (k^{\sets-1})^k (e^{2\sets})^k }{ ((2k)^{\sets})^{2k}  } 
 \le 8^{{\sets}} \left[(16.605 \cdot e^2)^{\sets} \cdot \frac{n^2}{k^{\sets+1}}  \right]^k.
\end{eqnarray*}
When $k=125\cdot n^{2/({\sets}+1)}$, then also $k\ge 125$ holds.
Further, the expression inside  the brackets is less than $\frac{1}{1.018^{\sets}}$  (we have $16.605\cdot e^2<\frac{125}{1.018}$).
Since $8<1.018^{125} \le 1.018^k$, this completes the proof.
\hfill$\qed$
\end{proof}

We remark that the upper bound of $33.21^k$ for the number of plane perfect matchings of any set of $k$ points in the plane in general position is by far not tight. 
Actually, Sharir and Welzl~\cite{SharirW06} showed that this number can be bounded by $O(10.05^k)$.
However, for the above proof, we require an explicit upper bound that holds for any value of $k\ge 125$ and hence we did not use this result.

\section{Forcing a single-edge compatible matching}
\label{sec:force}

In this section we consider the following question:
Given an unlabeled point set $P$ with $n$ points,
is there an integer $\sets$ such that there exist $\sets$ labelings of $P$ for which every compatible matching has at most one edge?
If $\sets$ exists, we denote as $\force(n;P)$ the minimum number $\sets$ of copies of $P$ such that $\cmk(n;P,\dots,P)=1$ (where $P$ appears $\force(n;P)$ times).
Otherwise, we set $\force(n;P)=\infty$.
In other words, we are asking for the existence (and minimal number) of labelings of the set $P$ so that any pair of labeled edges crosses for at least one labeling. 
We remark that, again, the size $n$ of the point sets is 
included in the notation only for the sake of clarity. 

Note that $\force(n;P) = \infty$ if and only if the straight-line drawing of $K_n$ on~$P$ does not contain any crossing.
Hence $\force(n;P)$ is finite for any set $P$ of $n\geq 5$ points.
More specifically, if the straight-line drawing of $K_n$ on $P$ contains at least one crossing, then $\force(n;P)$ is at most $3{n \choose 4}=\bigO(n^4)$.
This bound is due to the fact that, if we focus on one crossing edge pair, then there are $3{n \choose 4}$ possible labelings for these two edges.
If the straight-line drawing of $K_n$ on $P$ contains exactly one crossing, then $3{n \choose 4}$ is tight:
In this case, each pair of labeled edges must be mapped to the unique crossing edge pair, as otherwise, that pair of labeled edges would be compatible. 
Hence, any point set $P_4$ of 4 points in convex position has $\force(4;P_4)=3$
and any point set $P_5$ of 5 points with triangular convex hull has $\force(5;R_5)=15$.

We first focus on upper bounds and on the case when $P$ is in convex position. 
We denote by $\cforce(n)$  the minimum number of copies of a convex set with $n$ points that
need to be labeled so that the largest compatible matching consists of only a single edge.

Let $b(n)=\lceil \log_2 n \rceil$, which is the number of bits that are needed to represent the labels 1 to $n$. We construct a family of $\frac{3}{2}b(n)^2$ labeled convex $n$-point sets such that all pairs of edges cross in at least one set. 
First consider three labeled convex point sets, which are obtained by partitioning the set of labels into four blocks $A$, $B$, $C$, $D$, and combining those blocks in different orders and orientations as depicted in \cref{fig:three_drawings}.
The order within a block is arbitrary, but identical for all three sets (up to reflection; those block orientations are indicated by arrows).

\begin{figure}[ht]
	\centering
	\begin{subfigure}{.31\textwidth}
		\centering
		\includegraphics[scale=0.7,page=2]{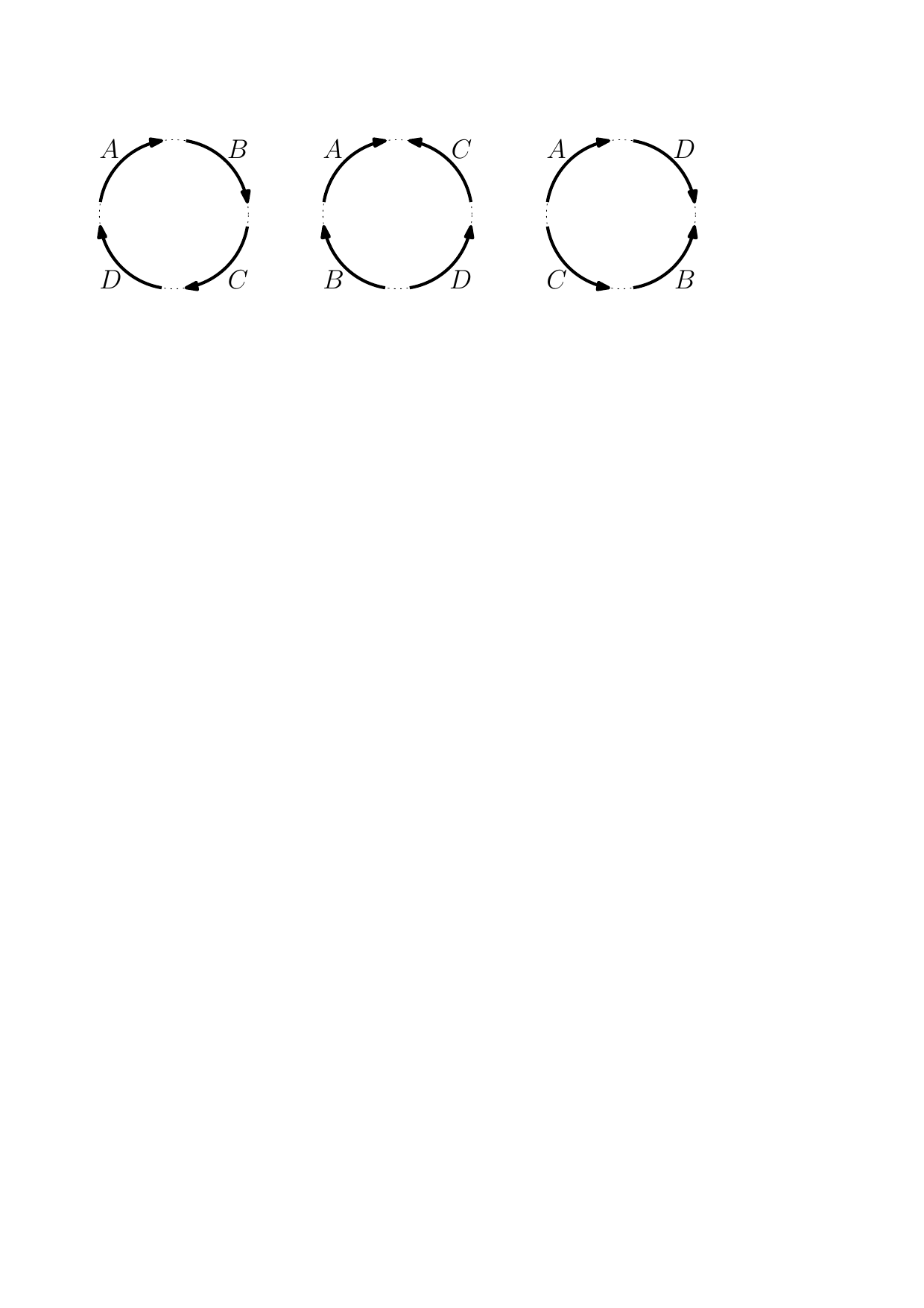}
		\caption{}
		\label{fig:three_drawings1}
	\end{subfigure}
	\begin{subfigure}{.31\textwidth}
		\centering
		\includegraphics[scale=0.7,page=3]{three_drawings.pdf}
		\caption{}
		\label{fig:three_drawings2}  
	\end{subfigure}
	\begin{subfigure}{.31\textwidth}
		\centering
		\includegraphics[scale=0.7,page=4]{three_drawings.pdf}
		\caption{}
		\label{fig:three_drawings3}  
	\end{subfigure}
	\caption{Three labeled point sets obtained from different orders and orientations of four blocks $A$, $B$, $C$, and $D$.
	}
	\label{fig:three_drawings}
\end{figure}

\setcounter{lemmathreesets}{\value{lemma}}
\begin{restatable}{lemma}{constUBcforce}\label{lem:ub_force_preparation}
	Consider three convex point sets $\pe_1$, $\pe_2$, and $\pe_3$ that are labeled as in \cref{fig:three_drawings} for some partition $A$, $B$, $C$, and $D$ of their label set.
	Then any pair of independent\footnote{Two edges are independent if they do not share an endpoint.} edges, where none of them has both labels in one of the blocks $A$, $B$, $C$, and $D$, forms a crossing in at least one of $\pe_1$, $\pe_2$, and $\pe_3$.
\end{restatable}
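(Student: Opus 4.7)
The plan is to proceed by a case analysis on the multiset of blocks in which the four endpoints of the two independent edges $e_1=u_1v_1$ and $e_2=u_2v_2$ lie. Since neither edge has both endpoints in a single block, each edge connects two distinct blocks. Write the blocks of the four endpoints as a multiset drawn from $\{A,B,C,D\}$; up to relabeling it is one of (i) four distinct blocks, (ii) three distinct blocks (one block hosts two endpoints, one from each edge), or (iii) two distinct blocks (each of the two blocks hosts one endpoint of each edge). In each case I will show that in at least one of $\pe_1,\pe_2,\pe_3$ the cyclic order of the four endpoints on the convex hull alternates between $e_1$ and $e_2$, which for convex point sets is equivalent to $e_1$ crossing $e_2$.

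For case (i), the outcome depends only on the cyclic order of the four blocks, and the three configurations shown in \cref{fig:three_drawings} are chosen so that the three possible pairings of four labels into two crossing pairs, namely $\{AC,BD\}$, $\{AB,CD\}$ and $\{AD,BC\}$, are each realized as the ``alternating'' pairing in exactly one of the three arrangements. I would verify this by inspecting which block is opposite which in each of the three cyclic orders of blocks, observing that together they cover all three perfect matchings of $\{A,B,C,D\}$ into two pairs.

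For cases (ii) and (iii), the cyclic order of blocks no longer suffices and the arrows in \cref{fig:three_drawings} enter the argument: the relative orientation of the shared block (case ii) or of both shared blocks (case iii) determines whether the two endpoints inside a common block alternate with the other two endpoints. Here I would fix, say, $e_1$ to have an endpoint $u_1 \in X$ and $e_2$ an endpoint $u_2 \in X$, and note that reversing the orientation of $X$ swaps the alternation pattern. I would then check that among the three configurations there is one in which the oriented block(s) are arranged so that the four endpoints alternate regardless of their specific positions inside the blocks. This reduces to a small finite check: enumerate the choices of which two blocks are involved (there are $\binom{4}{2}=6$ pairs for case (iii), and $4\cdot\binom{3}{2}=12$ ordered triples for case (ii), with many symmetries), and in each verify that the three configurations collectively cover every admissible pair of endpoints inside those blocks.

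The main obstacle is case (ii), where the shared block contains one endpoint of each edge and the two free endpoints lie in two further, possibly oriented-dependent blocks. The crossing depends jointly on the cyclic positions of the three involved blocks and on the orientation of the shared block, and one must make sure that each of the $\binom{4}{2}\cdot 2 = 12$ configurations of (shared block, pair of outer blocks) alternates in at least one of $\pe_1,\pe_2,\pe_3$. I expect that the design of \cref{fig:three_drawings} is calibrated precisely so that every case is covered exactly when the three block arrangements together realize all three opposite-pair matchings of $\{A,B,C,D\}$ and the arrows within a block are consistent across the three drawings; verifying this cleanly, rather than by brute enumeration, is the delicate part.
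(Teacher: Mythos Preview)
Your approach is essentially the same as the paper's: the identical three-way case split on how many blocks meet the four endpoints, with the same key observations (for four blocks only the cyclic block order matters and the three drawings realize all three opposite-pair matchings; for two or three blocks the arrow orientations of the shared block(s) are what decide crossing, and one must check that both relative orientations occur among the three drawings).

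One phrasing in your outline is off and worth correcting before you write it up. In case~(iii) you say you will find one configuration ``in which the oriented block(s) are arranged so that the four endpoints alternate regardless of their specific positions inside the blocks.'' No single drawing can do that: when both edges span the same pair of blocks $X,Y$, whether they cross depends precisely on whether the order of $u_1,u_2$ within $X$ agrees or disagrees with the order of $v_1,v_2$ within $Y$, and flipping the orientation of one block swaps crossing and non-crossing. The correct claim---and what the paper verifies---is that for every pair of blocks the three drawings contain both relative orientations (same and reversed), so whichever of the two situations occurs, one of the drawings yields a crossing. The analogous statement for case~(ii) is that for each choice of common block and pair of outer blocks, the three drawings realize both cyclic placements of the outer blocks relative to the orientation of the common block; the paper simply checks this by listing the $12$ cases, which is also what you propose.
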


\begin{proof}
	We consider three cases, depending on the number $x$ of blocks containing endpoints of the edges $e$ and $f$.
	\begin{description}
		\item[Case $x=2$:] Then $e$ and $f$ are spanned by the same two subsets. As any pair of blocks shows up in the same and in inverse orientation in at least one of the three drawings, this guarantees a crossing. For example, let the two subsets be $A$ and $B$. They have the same orientation in~\cref{fig:three_drawings1}  and~\ref{fig:three_drawings2}, but inverse orientation in~\cref{fig:three_drawings3}. An analogous property holds for the remaining five combinations.
		\item[Case $x=3$:]  W.l.o.g.\ $e$ and $f$ have their starting point in the same subset, but the endpoints in different subsets. There are 12 possible configurations of one common and two disjoint subsets, and it is straightforward to check that each situation shows up in both possible orientations with respect to the common set. For example, let the common set be $A$, and the other sets $C$ and $D$. The orientation of $A$ is the same in all three drawings, but the order of $C$ and $D$ is 
		inverted in~\cref{fig:three_drawings1} and~\ref{fig:three_drawings3}. If the common set is $B$ and the two other sets are again $C$ and $D$, then order of the three sets is the same in all three drawings, but the orientation of the common set $B$ is inverted in~\cref{fig:three_drawings1} and~\ref{fig:three_drawings3}. Thus, in both cases a crossing is guaranteed.
		\item[Case $x=4$:] In this case the orientation of the subsets is not relevant. There are only three possible combinations of such edges ($A-B$ with $C-D$, $A-C$ with $B-D$, and $A-D$ with $B-C$) and the three drawings cover one case each. \hfill$\qed$ 
	\end{description}
\end{proof}

	We next identify a small number of 4-partitions of the label set $\{1,2,\ldots,n\}$ such that 
	each edge pair fulfills the condition of \cref{lem:ub_force_preparation} in at least one of the partitions (when the four subsets form blocks).
	This yields the following constructive upper bound for $\cforce(n)$.

\begin{proposition}[Constructive upper bound on {\boldmath $\cforce(n)$}]\label{lem:partitions}
	For any $n\!\geq\!4$ and for $b(n)=\lceil \log_2 n\rceil$, we can define $3{b(n) \choose 2}$ labeled convex sets of $n$ points such that the largest matching compatible to all of them consists of a single edge. 
\end{proposition}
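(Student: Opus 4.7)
The plan is to construct $\binom{b(n)}{2}$ ordered $4$-partitions of the label set $\{1,\dots,n\}$ such that every pair of vertex-disjoint labeled edges is separated by some partition, where by ``separated'' I mean that neither edge has both endpoints in a single block of that partition. Given such a family, applying \cref{lem:ub_force_preparation} to each partition yields three labeled convex $n$-point sets per partition, for a total of $3\binom{b(n)}{2}$ labeled sets. Any independent edge pair then satisfies the hypothesis of \cref{lem:ub_force_preparation} for some partition of the family, and hence is guaranteed to cross in at least one of the three drawings associated with that partition. Consequently, no matching with two or more edges can be compatible with all of the constructed sets, while a single edge trivially is; this establishes the claimed upper bound $\cforce(n) \le 3\binom{b(n)}{2}$.

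The partitions arise from a bit-indexed construction. Identify each label $i\in\{1,\dots,n\}$ with its $b(n)$-bit binary expansion. For every pair of bit positions $(p,q)$ with $1\le p<q\le b(n)$, define a partition whose four blocks $B_{uv}$, indexed by $(u,v)\in\{0,1\}^2$, contain exactly those labels whose bits at positions $p$ and $q$ equal $u$ and $v$, respectively. Some of these blocks may be empty, but this causes no issue for \cref{lem:ub_force_preparation}: its hypothesis ``neither edge has both labels in one of the blocks'' is vacuous for empty blocks, and its three canonical drawings from \cref{fig:three_drawings} remain well-defined when any subset of the blocks is empty. Within each non-empty block, one fixes an arbitrary label order, identical across the three drawings up to reflection, as required by the lemma.

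The crux is verifying the separation property. Let $e=\{a,b\}$ and $f=\{c,d\}$ be vertex-disjoint edges, and write $D(x,y)$ for the set of bit positions at which labels $x$ and $y$ differ; since $a\neq b$ and $c\neq d$, both $D(a,b)$ and $D(c,d)$ are non-empty. Pick $p\in D(a,b)$. If $p\in D(c,d)$, choose any $q\neq p$, which exists because $n\ge 4$ implies $b(n)\ge 2$. Otherwise, pick $q\in D(c,d)$, which automatically satisfies $q\neq p$. In either case, $a$ and $b$ disagree on bit $p$ and therefore lie in different blocks of the $(p,q)$-partition, while $c$ and $d$ disagree on some bit in $\{p,q\}$ and hence also lie in different blocks. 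Thus the $(p,q)$-partition separates $e$ and $f$.

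The only nontrivial step is locating the right family of partitions; once one recognises that pairs of bit positions do the job, the separation property is an immediate case check and the proposition follows directly from \cref{lem:ub_force_preparation}. I do not expect any serious obstacle beyond correctly handling the (harmless) possibility of empty blocks when $n$ is not a power of two.
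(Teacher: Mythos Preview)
Your proposal is correct and follows essentially the same approach as the paper: the paper likewise builds the $\binom{b(n)}{2}$ partitions from pairs of bit positions, verifies the separation property by picking a distinguishing bit for each edge (and choosing an arbitrary second position when the two coincide), and then invokes \cref{lem:ub_force_preparation}. Your explicit remark about possibly empty blocks is a small extra point of care not spelled out in the paper, but otherwise the arguments coincide.
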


\begin{proof}
	Given a convex set of $n$ points, we construct $b(n) \choose 2$ 4-partitions of the labels and use each such partition to obtain three labeled point sets as depicted in \cref{fig:three_drawings}. 
	For any two bit positions $i,j$,  $0 \leq i \neq j < b(n)$, of the labels,
	partition the label set $\{1,2,\ldots,n\}$ so that $A$ contains all labels where those two bits are zero, $B$ those where the bits are zero-one, $C$ those with one-zero, and finally $D$ the ones with both one. This gives $b(n) \choose 2$ different partitions. 
	
	Now consider two arbitrary edges $e$ and $f$. Then there is a bit position in which the two endpoints of $e$ have different values, and the same is true for $f$. Let $i$ and $j$, respectively, be those positions. If this would give $i=j$, then choose $j$ arbitrarily but not equal to $i$. By \cref{lem:ub_force_preparation}, the edges $e$ and $f$ cross in one of the three labeled point sets for the partition generated for $i$ and $j$. 
\hfill$\qed$
\end{proof}

The upper bound $\bigO(\log^2 n)$ of $\cforce(n)$ from \cref{lem:partitions} is constructive but it is not asymptotically tight.
Next we present a probabilistic argument which shows that we actually have $\force(n;P)=\bigO(\log n)$ for any point set $P$ of $n\ge 5$ points.

\begin{lemma}[Probabilistic upper bound on {\boldmath $\force(n;P)$}]\label{lem:upperforce}
	Given a set ${P}$ of $n\geq 5$ points in general position, there exists a constant $c_P\ge15/14$ such that
	$\force(n;P)\leq \log_{c_P}(3{n \choose 4})=\bigO(\log n)$. 
\end{lemma}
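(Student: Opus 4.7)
The plan is to use a probabilistic argument: pick $\sets$ labelings of $P$ independently and uniformly at random, and show that for $\sets = \lceil \log_{15/14}(3\binom{n}{4})\rceil$, with positive probability every pair of disjoint labeled edges crosses in at least one of them. Since a compatible matching with two or more edges must contain some pair of disjoint labeled edges that is plane in every labeling, such an outcome certifies $\cmk(n;P,\dots,P)=1$. The number of pairs of disjoint labeled edges to control is $3\binom{n}{4}$ (choose $4$ labels, then one of the three perfect matchings on them).

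Fix such an edge pair $e_1=\{a,b\}$, $e_2=\{c,d\}$ and analyze the probability that $e_1$ and $e_2$ cross in a uniformly random labeling of $P$. The four labels $a,b,c,d$ land on a uniformly random ordered $4$-tuple of distinct points of $P$. The two edges cross if and only if (i) those four points are in convex position, and (ii) the pairing $\{a,b\}\,|\,\{c,d\}$ corresponds to the two diagonals of their convex quadrilateral. Letting $X_4$ denote the number of $4$-subsets of $P$ in convex position, event (i) has probability $q := X_4/\binom{n}{4}$, and, conditioned on (i), event (ii) has probability exactly $1/3$ since the labels are assigned uniformly. Hence the crossing probability is $q/3$.

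To bound $q$ from below I invoke Klein's theorem (the $n=5$ case of Erdős--Szekeres): every $5$ points in general position contain a convex $4$-subset. Summing this over all $\binom{n}{5}$ five-point subsets of $P$ and observing that each convex $4$-subset is counted $n-4$ times yields
\[
X_4 \;\ge\; \frac{\binom{n}{5}}{n-4} \;=\; \frac{1}{5}\binom{n}{4},
\]
so $q\ge 1/5$ and the crossing probability for each fixed edge pair is at least $1/15$. Consequently, the probability that a given edge pair remains non-crossing in all $\sets$ independent labelings is at most $(14/15)^{\sets}$.

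A union bound over the $3\binom{n}{4}$ pairs gives
\[
\Pr\bigl[\text{some disjoint labeled edge pair is plane in every labeling}\bigr] \;\le\; 3\binom{n}{4}\cdot (14/15)^{\sets}.
\]
Taking $\sets = \lceil \log_{15/14}(3\binom{n}{4})\rceil$ makes the right-hand side strictly less than $1$, so a labeling sequence of this length witnessing $\cmk(n;P,\dots,P)=1$ exists. Setting $c_P := 15/14$ (which is independent of $P$, so it trivially satisfies $c_P\ge 15/14$) yields $\force(n;P)\le \log_{c_P}(3\binom{n}{4})=\bigO(\log n)$, as claimed. The only non-routine step is the $X_4 \ge \binom{n}{4}/5$ bound, which is exactly where general position is used via Klein's theorem; everything else is counting and a Chernoff-free union bound.
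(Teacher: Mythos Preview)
Your proof is correct and follows essentially the same route as the paper: pick uniformly random labelings, compute the crossing probability of a fixed edge pair as $\alpha_P/3$ with $\alpha_P\ge 1/5$ via the five-point Erd\H os--Szekeres (Klein) fact, and apply a union bound over the $3\binom{n}{4}$ disjoint edge pairs. The only cosmetic difference is that the paper sets $c_P=(1-\alpha_P/3)^{-1}$, which depends on $P$ and can exceed $15/14$, whereas you take the worst case $c_P=15/14$; both choices satisfy the stated conclusion.
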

\begin{proof}
Fix $P$ and let $\alpha_P\in(0,1)$ be the proportion of 4-tuples of points in ${P}$ that are in convex position.
Note that since any 5-tuple of points contains at least one 4-tuple in convex position, we have $\alpha_P\ge 1/5$ (here we use $n\ge 5$).

There are $r=3\binom{n}{4}$ pairs of non-incident edges.
Fix one of them, say $ac$ and~$bd$.
Note that when ${P}$ is labeled uniformly at random, the edges $ac$, $bd$ intersect with constant probability $\alpha_P/3$:
indeed, the edges intersect if their 4 endpoints form a convex quadrilateral and the points $a$, $b$, $c$, $d$ lie on its perimeter in two out of the six possible cyclic orders.

Now set $c_P=(1-\alpha_P/3)^{-1}\ge 15/14$ and consider $\sets>\log_{c_P}(r)$ copies of ${P}$ labeled independently and uniformly at random.
A pair of edges without common vertex is compatible (with respect to the $\sets$ labeled point sets) if it is non-crossing in all $\sets$ point sets.
As the labelings are chosen independently and uniformly at random, any fixed pair of edges is compatible with probability $\rho=(1-\alpha_P/3)^{\sets}<1/r$.
By linearity of expectation, the expected number of compatible pairs of edges is $r\cdot \rho <1$.
Therefore there exists a labeling of the $\sets$ point sets for which no pair of edges is compatible.
In such a labeling, the largest compatible matching consists of a single edge.
\hfill$\qed$
\end{proof}

We remark that for a fixed set $P$, one can often obtain a better lower bound on the parameter $\alpha_P$ used in the proof and thus a better lower bound on $c_P$, which then gives a constant factor improvement on $\force(n;P)$.
Specifically, finding the constant $\alpha := \lim_{n\to\infty} \inf_{P,|P|\ge n}\{\alpha_P\}$
is a topic of high relevance in connection with the rectilinear crossing number of the complete graph, see~\cite{crsurvey} for a nice survey of this area.
The currently best known bounds are $0.37997256 < \alpha < 0.38044919$~\cite{crlowerbound,crupperbound}.
Moreover, when $P$ is in convex position we have $\alpha_P=1$ and thus the above proof implies $\cforce(n)\leq \log_{3/2}(3{n \choose 4})$.
On the other hand, none of these observations leads to an asymptotic improvement on the upper bound of $\force(n;P)$ or $\cforce(n)$. In the following we show that any such asymptotic improvement is in fact impossible.

\begin{lemma}[Lower bound on {\boldmath $\force(n;P)$}] \label{lem:lowerforce}
	Fix $k\ge 1$ and let $P$ be any set of $n=2^k+3$ points in general position.
	Then $\force(n;P) \ge k+2=\Omega(\log n)$.
\end{lemma}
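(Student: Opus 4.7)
The plan is to show that any $\sets = k+1$ labelings $L_1,\ldots,L_\sets$ of $P$ still admit a compatible $2$-matching; this immediately gives $\force(n;P) > k+1$, hence $\force(n;P) \ge k+2$. The argument is a pigeonhole that exploits a single convex hull edge of $P$.

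First, I would pick two adjacent vertices $p_1,p_2$ of the convex hull of $P$ (which exist since $n\ge 5$ is in general position) and set $a := L_1(p_1)$ and $b := L_1(p_2)$. Under $L_1$ the labeled edge $(a,b)$ is drawn as $p_1p_2$, a convex hull edge. By general position no other point of $P$ lies on the supporting line of $p_1p_2$, so every other point is strictly on the interior side; consequently, any segment between two points of $P\setminus\{p_1,p_2\}$ lies in that open half-plane and cannot cross $p_1p_2$ in $L_1$.

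Next, for each remaining label $c \in [n]\setminus\{a,b\}$ and each $i \in \{2,\ldots,\sets\}$, I would record which side of the line through $L_i^{-1}(a),L_i^{-1}(b)$ contains $L_i^{-1}(c)$, giving a sign $\chi_i(c) \in \{+,-\}$ (well-defined by general position). This associates to every such $c$ a sign vector $\chi(c) \in \{+,-\}^{\sets-1}$. There are $n-2 = 2^k+1$ labels but only $2^{\sets-1} = 2^k$ possible sign vectors, so pigeonhole produces distinct $c,d$ with $\chi(c)=\chi(d)$. For each $i\ge 2$ the points $L_i^{-1}(c),L_i^{-1}(d)$ then lie in a common open half-plane bounded by the line through $L_i^{-1}(a),L_i^{-1}(b)$, so the segment $L_i^{-1}(c)L_i^{-1}(d)$ cannot cross $L_i^{-1}(a)L_i^{-1}(b)$. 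Together with the $L_1$ observation, this shows that $\{(a,b),(c,d)\}$ is a compatible $2$-matching, completing the proof.

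The main obstacle is making the pigeonhole tight. A naive sign vector over all $\sets$ labelings would give $2^\sets = 2^{k+1}$ classes but only $n-2 = 2^k+1$ labels, and the argument would collapse. The whole point of the hypothesis $n = 2^k+3$ is that after ``spending'' $L_1$ to secure a convex hull edge that is automatically uncrossable, one is left with exactly $\sets-1 = k$ bits of side information, and $2^k+1$ labels into $2^k$ boxes is the borderline case where pigeonhole still succeeds.
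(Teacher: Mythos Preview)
Your proof is correct and essentially identical to the paper's: both pick a convex hull edge $ab$ in one of the $k+1$ labeled copies (you use $L_1$, the paper uses the $(k{+}1)$-st copy) to get noncrossing for free there, and then apply pigeonhole on the $2^k$ side-of-line sign vectors in the remaining $k$ copies to find a second edge $cd$ compatible with $ab$ everywhere. The paper's write-up is terser, but the underlying idea and the tightness observation you spell out at the end are exactly the same.
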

\begin{proof}
	We use a similar argument as the one used in \cref{thm:lb}.
	Denote by $\pe_1,\dots,\pe_{k+1}$ any $k+1$ labeled copies of $P$. 
	Take an arbitrary edge $ab$ on the convex hull of $\pe_{k+1}$. The line containing $ab$ divides each of $\pe_1,\dots,\pe_k$ into two parts (one possibly empty).
	Since there are $n-2=2^k+1>2^k$ unmatched points, there exist two points, say $x$, $y$, that lie in the same part, for each $i=1,\dots,k$.
	Thus the two edges $xy$ and $ab$ form a compatible 2-matching implying that $\force(n;P) \ge k+2$.
\hfill$\qed$
\end{proof}

Combining upper and lower bounds for $\force(n;P)$ from \cref{lem:lowerforce,lem:upperforce}, we obtain the following conclusion: 

\begin{theorem}\label{cor:forcetightbound}
	For every set $P$ of $n \geq 5$ points in general position, it holds that 
	\begin{eqnarray*} \force(n;P) & = & \Theta(\log n). \end{eqnarray*}
\end{theorem}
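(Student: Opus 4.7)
The plan is essentially to combine the two lemmas already proven, with only a small adaptation needed for the lower bound, which is stated for the specific sizes $n=2^k+3$.

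\textbf{Upper bound.} For arbitrary $n \ge 5$, \Cref{lem:upperforce} already gives $\force(n;P) \le \log_{c_P}(3\binom{n}{4})$ for some constant $c_P \ge 15/14$ depending only on $P$. Since $\log_{c_P}(3\binom{n}{4}) = \bigO(\log n)$ (with a multiplicative constant absorbing $1/\log c_P$ and the factor of~$4$ from the binomial), this directly yields the $\bigO(\log n)$ half of the theorem.

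\textbf{Lower bound.} \Cref{lem:lowerforce} states the lower bound only for $n = 2^k+3$, so I would first observe that its proof in fact establishes the stronger statement: whenever $n-2 > 2^k$, any set $P$ of $n$ points in general position satisfies $\force(n;P) \ge k+2$. Indeed, the proof only uses the pigeonhole step $n-2 > 2^k$; nothing depends on $n$ hitting the value $2^k+3$ exactly. Given an arbitrary $n \ge 5$, I would therefore set $k = \lfloor \log_2(n-3) \rfloor$, so that $2^k \le n-3 < 2^{k+1}$ and hence $n-2 \ge 2^k + 1 > 2^k$. Applying the (mildly generalized) lemma gives
\[
\force(n;P) \;\ge\; k + 2 \;=\; \lfloor \log_2(n-3) \rfloor + 2 \;=\; \Omega(\log n).
\]

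\textbf{Putting it together.} Chaining the two estimates yields $\force(n;P) = \Theta(\log n)$ for every $n \ge 5$ and every set $P$ of $n$ points in general position. There is no real obstacle here: the only step beyond quoting the lemmas is the observation that the pigeonhole argument in the proof of \Cref{lem:lowerforce} is monotone in $n$, so a single sentence suffices to remove the restriction $n = 2^k+3$. I would write the proof of \Cref{cor:forcetightbound} as a three-line combination, possibly preceded by the remark that the lower-bound argument extends to all $n\ge 5$ verbatim.
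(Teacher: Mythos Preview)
Your proposal is correct and matches the paper's approach exactly: the paper simply states that the theorem follows by combining \Cref{lem:upperforce} and \Cref{lem:lowerforce}, with no further argument. Your treatment is in fact more careful than the paper's, since you explicitly address the minor technicality that \Cref{lem:lowerforce} is only stated for $n=2^k+3$ and observe that its pigeonhole step goes through whenever $n-2>2^k$.
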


\section{Conclusion and open problems}

In this work we studied the size of compatible matchings.
A natural open problem is the computational complexity of finding compatible matchings of a certain size or even deciding their existence:  
		How fast can we decide if two (or more) given (general or convex) labeled point sets have a perfect compatible matching (or a compatible matching of size $k$)?

We showed that for any $\sets$ labeled $n$-point sets, there is always a compatible matching of size $\Omega(n^{1/\sets})$. 
On the other hand, for any $\sets$ unlabeled $n$-point sets, there exist labelings of these $\sets$ point sets such that the largest compatible matching has size $\bigO(n^{2/(\sets +1)})$.
Even for only two sets these bounds are currently the best ones known.
It would be interesting to close this gap, even if this only hold for a special point set.

Furthermore, for two convex point sets, we only have an explicit labeling such that the largest compatible matching has $2n/5$ matching edges.
This leads to the following open problems:
Can we give a family of explicit labeled convex point sets, such that the largest compatible matching has size $\bigO(n^{1-\epsilon})$?
Furthermore, can we give a labeling of $\sets$ point sets that allows only small compatible matchings?
Note that the second problem is also open if we only look at convex point sets.
It would also be nice to have a construction that matches the probabilistic bound, even if the construction only works for special point sets.

Further we studied how many labeled copies of a point set needed such that any compatible matching only contains one edge.
We showed that $\Theta(\log n)$ point sets are sufficient and that this bound is tight.
We constructed a labeling of roughly $\frac{3}{2} \log^2 n$ convex labeled point sets that obtain this property.
This leads to the following question:
Can we construct a family of $\Theta(\log n)$ (maybe convex) labeled point sets such that any compatible matching only contains one edge?

Finally, as a game version of this problem, consider the following game:
Two players alternately add an edge which must neither cross nor be incident to any previously added edge. The last player who is able to add such an edge wins. %(normal play). 
It is not hard to see that for a single set of points in convex position, this is the well-known game Dawson's Kayles, see e.g.~\cite{winningways}. This game can be perfectly solved using the nimber theory developed by Sprague-Grundy, see also~\cite{winningways} for a nice introduction to the area. An interesting generalization of Dawson's Kayles occurs
when the players use two (or more) labeled (convex) point sets and add compatible edges.

\bibliographystyle{splncs04}
\bibliography{references}

\end{document}